\renewcommand{\le}{\leqslant}
\renewcommand{\ge}{\geqslant}
\newcommand{\ol}{\overline}
\newcommand{\eps}{\varepsilon}
\newcommand{\emp}{\emptyset}
\newcommand{\Sig}{\Sigma}
\newcommand{\noin}{\noindent}
\newcommand{\bi}{\begin{itemize}}
\newcommand{\ei}{\end{itemize}}
\newcommand{\be}{\begin{enumerate}}
\newcommand{\ee}{\end{enumerate}}
\newcommand{\bd}{\begin{description}}
\newcommand{\ed}{\end{description}}
\newcommand{\bq}{\begin{quote}}
\newcommand{\eq}{\end{quote}}
\newcommand{\cD}{{\mathcal D}}
\newcommand{\cN}{{\mathcal N}}
\newcommand{\cP}{{\mathcal P}}
\newcommand{\cT}{{\mathcal T}}
\newcommand{\cU}{{\mathcal U}}
\newcommand{\one}{{\mathbbm1}}
\title{Unrestricted State Complexity of Binary Operations on  Regular Languages\thanks{This work was supported by the Natural Sciences and Engineering Research Council of Canada 
grant No.~OGP0000871.}
}
\author{Janusz~Brzozowski}
\titlerunning{Unrestricted State Complexity}
\authorrunning{J. Brzozowski}   
\institute{David R. Cheriton School of Computer Science, University of Waterloo, \\
Waterloo, ON, Canada N2L 3G1\\
\{{\tt brzozo@uwaterloo.ca}\}
}
\begin{document}
\maketitle

\begin{abstract}
I study the state complexity of binary operations on regular languages over different alphabets.
It is well known that if $L'_m$ and $L_n$ are languages restricted to be over the same alphabet, with $m$ and $n$ quotients, respectively,  the state complexity of any binary boolean operation on $L'_m$ and $L_n$ is $mn$, and that of the product (concatenation) is $(m-1)2^n +2^{n-1}$. 
In contrast to this, I show that if $L'_m$ and $L_n$ are over their own different alphabets, 
the state complexity of union and symmetric difference is $mn+m+n+1$, that of intersection\footnote{An earlier version of this paper will appear in the Proceedings of DCFS 2016. In that version the complexity of intersection was incorrectly stated to be $mn+1$  because an empty state not reachable by words in the alphabet of the intersection was not removed. For the same reason, the complexity  of difference was incorrectly given as $mn+m+1$. This is explained in detail in the proof of Theorem 1.}  is $mn$,
that of difference is $mn+m$,
and that of the product is  $m2^n+2^{n-1}$.
\medskip

\noin
{\bf Keywords:}
boolean operation, concatenation, different alphabets, most complex languages, product, quotient complexity, regular language, state complexity, stream, unrestricted complexity
\end{abstract}

\section{Motivation}

Formal definitions are postponed until Section~\ref{sec:basics}.

\smallskip

The first comprehensive  paper on state complexity was published by A. N. Maslov~\cite{Mas70} in 1970, but this work was unknown in the West for many years. Maslov wrote:  
 \begin{quote} {\it An important measure of the complexity of [sets of words representable in finite automata] is the number of states in the minimal representing automaton.
... if $T(A) \cup T(B)$ are representable in automata $A$ and $B$ with $m$ and $n$ states respectively ..., then:
	\be 
	\item
	$T(A) \cup T(B)$ is representable in an automaton with $m\cdot n$ states;
	\item
	$T(A).T(B)$ is representable in an automaton with $(m-1)2^n + 2^{n-1}$ states.
	\ee}
\end{quote}
The second comprehensive paper on state complexity was published by S. Yu, Q. Zhuang and K. Salomaa~\cite{YZS94} in 1994. Here the authors wrote:
\begin{quote}{\it
\be
\item
... for any pair of complete $m$-state DFA $A$ and $n$-state DFA $B$ defined on the same alphabet $\Sigma$, there exists a DFA with at most $m2^n-2^{n-1}$ states which accepts $L(A)L(B)$.
\item
... $m\cdot n$ states are ... sufficient for a DFA to accept the intersection (union) of an $m$-state DFA language and an $n$-state DFA language.
\ee}
\end{quote}

Here DFA stands for \emph{deterministic finite automaton}, and \emph{complete} means that there is a transition from every state under every input letter.

I will show that statements 1 and 2 of Maslov are incorrect, but undoubtedly Maslov had in mind languages over the same alphabet, in which case the statements are correct. 
In~\cite{YZS94} the first statement includes the same-alphabet restriction, but the second omits it (presumably it is implied by the context). 

The same-alphabet restriction is unnecessary: There is no reason why we should not be able to find, for example, the union of 
languages $L'_2=\{a,b\}^*b$ and $L_2=\{a,c\}^*c$  accepted by the minimal complete two-state automata $\cD'_2$ and $\cD_2$ of Figure~\ref{fig:example}, where an incoming arrow denotes the initial state and a double circle represents a final state.

\begin{figure}[ht]
\unitlength 8.5pt
\begin{center}\begin{picture}(30,4)(0,6)
\gasset{Nh=2,Nw=2,Nmr=1.25,ELdist=0.4,loopdiam=1.5}
\node(0')(1,7){$0'$}\imark(0')
\node(1')(8,7){$1'$}\rmark(1')
\node(0)(22,7){0}\imark(0)
\node(1)(29,7){1}\rmark(1)
\drawloop(0'){$a$}
\drawloop(1'){$b$}
\drawedge[curvedepth= .8,ELdist=.4](0',1'){$b$}
\drawedge[curvedepth= .8,ELdist=.4](1',0'){$a$}
\drawloop(0){$a$}
\drawloop(1){$c$}
\drawedge[curvedepth= .8,ELdist=.4](0,1){$c$}
\drawedge[curvedepth= .8,ELdist=.4](1,0){$a$}
\end{picture}\end{center}
\caption{Two minimal complete DFAs $\cD'_2$ and $\cD_2$.}
\label{fig:example}
\end{figure}
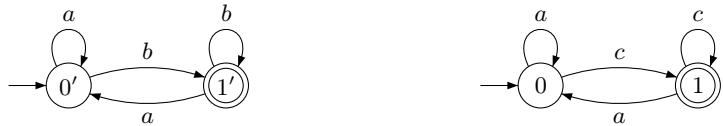
The union of $L'_2$ and $L_2$ is a language over three letters. To find the DFA for $L'_2 \cup L_2$, we view $\cD'_2$ and $\cD_2$ as incomplete DFA's, the first missing all transitions under $c$, and the second under $b$. 
After adding the missing transitions we obtain DFAs $\cD'_3$ and $\cD_3$ shown in Figure~\ref{fig:complete}. Now we can proceed as is usually done in the same-alphabet approach, and take the direct product of 
$\cD'_3$ and $\cD_3$ to find $L_2'\cup L_2$. Here it turns out that six states are necessary to represent $L'_2\cup L_2$, but the state complexity of  union is actually 
$(m+1)(n+1)$. 

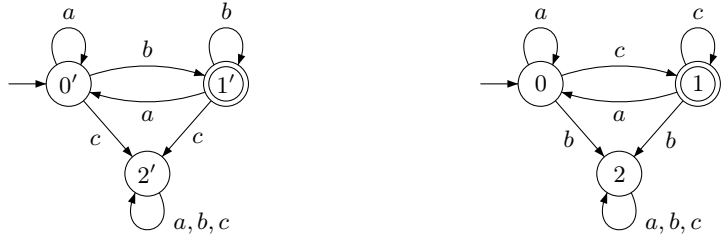
\begin{figure}[ht]
\unitlength 8.5pt
\begin{center}\begin{picture}(30,9)(0,1.6)
\gasset{Nh=2,Nw=2,Nmr=1.25,ELdist=0.4,loopdiam=1.5}
\node(0')(1,7){$0'$}\imark(0')
\node(1')(8,7){$1'$}\rmark(1')
\node(2')(4.5,3){$2'$}
\node(0)(22,7){0}\imark(0)
\node(1)(29,7){1}\rmark(1)
\node(2)(25.5,3){2}
\drawloop(0'){$a$}
\drawloop(1'){$b$}
\drawedge[curvedepth= .8,ELdist=.4](0',1'){$b$}
\drawedge[curvedepth= .8,ELdist=.4](1',0'){$a$}
\drawloop[loopangle=270,ELpos=25](2'){$a,b,c$}
\drawloop(0){$a$}
\drawloop(1){$c$}
\drawedge[ELdist=-1.1](0',2'){$c$}
\drawedge[ELdist=.3](1',2'){$c$}
\drawedge[curvedepth= .8,ELdist=.4](0,1){$c$}
\drawedge[curvedepth= .8,ELdist=.4](1,0){$a$}
\drawedge[ELdist=-1.1](0,2){$b$}
\drawedge[ELdist=.3](1,2){$b$}
\drawloop[loopangle=270,ELpos=25](2){$a,b,c$}
\end{picture}\end{center}
\caption{DFAs $\cD'_3$ and $\cD_3$ over three letters.}
\label{fig:complete}
\end{figure}

In general, when calculating the result of a binary operation on regular languages with different alphabets, we deal with special incomplete DFAs that are only missing some letters and all the transitions caused by these letters. The complexity of incomplete DFAs has been studied previously by Gao, K. Salomaa, and Yu~\cite{GSY11} and by Maia, Moreira and Reis~\cite{MMR15}. However, the objects studied there are \emph{arbitrary} incomplete DFAs, whereas we are interested only in \emph{complete DFAs with some missing letters}. 
Secondly, we study \emph{state} complexity, whereas the above-mentioned papers deal mainly with \emph{transition} complexity.
Nevertheless, there is some overlap. It was shown in~\cite[Corollary 3.2]{GSY11} that the incomplete state complexity of union is less than or equal to $mn+m+n$, and that this bound is tight in some special cases. In~\cite[Theorem 2]{MMR15}, witnesses that work in all cases were found. These complexities correspond to my result for union in Theorem~\ref{thm:boolean}.
Also in~\cite[Theorem 5]{MMR15}, the incomplete state complexity of product is shown to be $m2^n+2^{n-1}-1$, and this corresponds to my result for product in Theorem~\ref{thm:product}.

In this paper I remove the restriction of equal alphabets of the two operands. 
I prove that
the complexity of union and symmetric difference is $mn+m+n+1$, that of intersection is $mn$, that of difference is $mn+m$,                                                                                                     
and that of the product is  $m2^n+2^{n-1}$, if each language's own alphabet is used. 
I exhibit a new most complex regular language that meets the complexity bounds for boolean operations, product, star, and reversal, has a maximal syntactic semigroup and most complex atoms.  All the witnesses used here are derived from that one most complex language.

\section{Terminology and Notation}
\label{sec:basics}

We say that the \emph{alphabet of a regular language} $L$ is $\Sig$ (or that \emph{$L$ is a language over $\Sig$}) if $L\subseteq \Sig^*$ and for each letter $a\in \Sig$ there is a word $uav$ in $L$.
A basic complexity measure of $L$ with alphabet $\Sig$ is the number $n$ of distinct (left) quotients of $L$ by words in $\Sig^*$, where a \emph{(left) quotient} of $L$ by a word $w\in\Sig^*$ is $w^{-1}L=\{x\mid wx\in L\}$. 
The number of quotients of $L$ is its \emph{quotient complexity}~\cite{Brz10a}, $\kappa(L)$. 
A concept equivalent to quotient complexity is the \emph{state complexity}~\cite{YZS94} of $L$, which is the number of states in a complete minimal deterministic finite automaton (DFA) recognizing $L$.
Since we do not use any other measures of complexity in this paper (with the exception of one mention of time and space complexity in the next paragraph), we refer to quotient/state complexity simply as \emph{complexity}.

Let $L'_m\subseteq\Sig'^*$ and $L_n\subseteq \Sig^*$  be regular languages of  complexities $m$ and $n$, respectively.
The \emph{complexity of a binary operation}  $\circ$ on $L'_m$ and $L_n$ 
is  the maximal value of $\kappa(L'_m \circ L_n)$ as a function $f(m,n)$,
as $L'_m$ and $L_n$ range over all regular languages of complexity $m$ and $n$, respectively.
The complexity of an operation gives a worst-case lower bound on the time and space complexity of the operation. For this reason it has been studied extensively; see~\cite{Brz10a,Brz13,Yu01,YZS94} for additional references.

A \emph{deterministic finite automaton (DFA)} is a quintuple
$\cD=(Q, \Sigma, \delta, q_0,F)$, where
$Q$ is a finite non-empty set of \emph{states},
$\Sig$ is a finite non-empty \emph{alphabet},
$\delta\colon Q\times \Sig\to Q$ is the \emph{transition function},
$q_0\in Q$ is the \emph{initial} state, and
$F\subseteq Q$ is the set of \emph{final} states.
We extend $\delta$ to a function $\delta\colon Q\times \Sig^*\to Q$ as usual.
A~DFA $\cD$ \emph{accepts} a word $w \in \Sigma^*$ if ${\delta}(q_0,w)\in F$. The language accepted by $\cD$ is denoted by $L(\cD)$. If $q$ is a state of $\cD$, then the language $L^q$ of $q$ is the language accepted by the DFA $(Q,\Sigma,\delta,q,F)$. 
A state is \emph{empty} (or \emph{dead} or a \emph{sink state}) if its language is empty. Two states $p$ and $q$ of $\cD$ are \emph{equivalent} if $L^p = L^q$. 
A state $q$ is \emph{reachable} if there exists $w\in\Sig^*$ such that $\delta(q_0,w)=q$.
A DFA is \emph{minimal} if all of its states are reachable and no two states are equivalent.
Usually DFAs are used to establish upper bounds on the complexity of operations, and also as witnesses that meet these bounds.

If $\delta(q,a)=p$ for a state  $q\in Q$ and a letter $a\in \Sig$, we say there is a \emph{transition} under $a$ from $q$ to $p$ in $\cD$.
The DFAs defined above are \emph{complete} in the sense that there is \emph{exactly one} transition for each state  $q\in Q$ and each letter $a\in \Sig$. If there is \emph{at most one transition}  for each state of $Q$ and letter of $\Sig$,  the automaton is an \emph{incomplete} DFA.

A \emph{nondeterministic finite automaton (NFA)} is a 5-tuple
$\cD=(Q, \Sigma, \delta, I,F)$, where
$Q$,
$\Sig$ and $F$ are defined as in a DFA, 
$\delta\colon Q\times \Sig\to 2^Q$ is the \emph{transition function}, and
$I\subseteq Q$ is the \emph{set of initial states}. 
An \emph{$\eps$-NFA} is an NFA in which transitions under the empty word $\eps$ are also permitted.

To simplify the notation, without loss of generality we use $Q_n=\{0,\dots,n-1\}$ as the set of states of every DFA with $n$ states.
A \emph{transformation} of $Q_n$ is a mapping $t\colon Q_n\to Q_n$.
The \emph{image} of $q\in Q_n$ under $t$ is denoted by $qt$.
For $k\ge 2$, a transformation (permutation) $t$ of a set $P=\{q_0,q_1,\ldots,q_{k-1}\} \subseteq Q$ is a \emph{$k$-cycle}
if $q_0t=q_1, q_1t=q_2,\ldots,q_{k-2}t=q_{k-1},q_{k-1}t=q_0$.
This $k$-cycle is denoted by $(q_0,q_1,\ldots,q_{k-1})$, and acts as the identity on the states in $Q_n\setminus P$.
A~2-cycle $(q_0,q_1)$ is called a \emph{transposition}.
 A transformation that changes only one state $p$ to a state $q\neq p$ and acts as the identity for the other states is denoted by $(p\to q)$.
 The identity transformation is denoted by $\one$.

In any DFA, each $a\in \Sig$ induces a transformation $\delta_a$ of the set $Q_n$ defined by $q\delta_a=\delta(q,a)$; we denote this by $a\colon \delta_a$. 
For example, when defining the transition function of a DFA, we write $a\colon (0,1)$ to mean that
$\delta(q,a)=q(0,1)$, where the transformation $(0,1)$ acts on state $q$ as follows: if $q$ is 0  it maps it to 1, if $q$ is 1 it maps it to 0,
and it acts as the identity on the remaining states.

By a slight abuse of notation we use the letter $a$ to denote the transformation it induces; thus we write $qa$ instead of $q\delta_a$.
We extend the notation to sets of states: if $P\subseteq Q_n$, then $Pa=\{pa\mid p\in P\}$.
We also find it convenient to write $P\stackrel{a}{\longrightarrow} Pa$ to indicate that the image of $P$ under $a$ is $Pa$.
If $s,t$ are transformations of $Q$, their composition is denoted by $s\ast t$ and defined by
$q(s \ast t)=(qs)t$; the $\ast$ is usually omitted.
Let $\cT_{Q_n}$ be the set of all $n^n$ transformations of $Q_n$; then $\cT_{Q_n}$ is a monoid under composition.

A sequence $(L_n, n\ge k)=(L_k,L_{k+1},\dots)$, of regular languages is called a \emph{stream}; here $k$ is usually some small integer, and the languages in the stream usually have the same form and differ only in the parameter $n$. For example, $(\{a,b\}^*a^n\{a,b\}^* \mid n\ge 2)$ is a stream.
To find the complexity of a binary operation $\circ$ we need to find an upper bound on this complexity and two streams 
 $(L'_m, m \ge h)$ and $(L_n, n\ge k)$ of languages meeting this bound.
 In general, the two streams are different, but there are many examples where $L'_n$ ``differs only slightly'' from $L_n$; such a language $L'_n$ is called a \emph{dialect}~\cite{Brz13} of $L_n$. 

Let $\Sig=\{a_1,\dots,a_k\}$ be an alphabet; we assume that its elements are ordered as shown.
Let $\pi$ be a \emph{partial permutation} of $\Sig$, that is, a partial function $\pi \colon \Sig \rightarrow \Gamma$ where $\Gamma \subseteq \Sig$, for which there exists $\Delta \subseteq \Sig$ such that $\pi$ is bijective when restricted to $\Delta$ and  undefined on $\Sig \setminus \Delta$. 
We denote undefined values of $\pi$ by  ``$-$'', that is, we write $\pi(a)=-$, if $\pi$ is undefined at $a$.

If $L\subseteq \Sig^*$, we denote it by $L(a_1,\dots,a_k)$ to stress its dependence on $\Sig$.
If $\pi$ is a partial permutation, let $s_\pi(L(a_1,\dots,a_k))$ be the language obtained from $L(a_1,\dots,a_k)$ by the substitution  $s_\pi$ defined as follows: 
 for $a\in \Sig$, 
$a \mapsto \{\pi(a)\}$ if $\pi(a)$ is defined, and $a \mapsto \emp$ otherwise.
The \emph{permutational dialect}, or simply \emph{dialect},  of $L(a_1,\dots,a_k)$ defined by $\pi$ is the  language 
$L(\pi(a_1),\dotsc,\pi(a_k))= s_\pi(L(a_1,\dots,a_k))$.

Similarly,
let $\cD = (Q,\Sig,\delta,q_0,F)$ be a DFA; we denote it by
$\cD(a_1,\dots,a_k)$ to stress its dependence on $\Sig$.
If $\pi$ is a partial permutation, then the \emph{permutational dialect}, or simply \emph{dialect}, 
$\cD(\pi(a_1),\dotsc,\pi(a_k))$ of
$\cD(a_1,\dots,a_k)$ is obtained by changing the alphabet of $\cD$ from $\Sig$ to $\pi(\Sig)$, and modifying $\delta$ so that in the modified DFA 
$\pi(a_i)$ induces the transformation induced by $a_i$  in the original DFA.
One verifies that if the language $L(a_1,\dots,a_k)$ is accepted by DFA $\cD(a_1,\dots,a_k)$, then
$L(\pi(a_1),\dotsc,\pi(a_k))$ is accepted by $\cD(\pi(a_1),\dotsc,\pi(a_k))$.

If the letters for which $\pi$ is undefined are at the end of the alphabet $\Sig$, then they are omitted. For example,
if $\Sig=\{a,b,c,d\}$ and $\pi(a)=b$, $\pi(b)=a$, and $\pi(c)=\pi(d)=-$, then we write $L_n(b,a)$ for $L_n(b,a,-,-)$, etc.
     \medskip

\section{Boolean Operations}

A binary boolean operation is \emph{proper} if it is not a constant and does not depend on only one variable.
We  study the complexities  of four proper boolean operations only: union ($\cup$), symmetric difference ($\oplus$),
difference ($\setminus$), and intersection ($\cap$); the  complexity of any other proper operation can be deduced from these four.
For example, $\kappa(\ol{L'} \cup L)=\kappa\left( \ol{ \ol{L'}\cup L}\right)=\kappa (L'\cap \ol{L})=\kappa(L'\setminus L)$, where we have used the well-known fact that $\kappa(\ol{L})=\kappa(L)$, for any $L$.

The DFA of Definition~\ref{def:regular} is required for the next theorem; this DFA is the 4-input ``universal witness'' called $\cU_n(a,b,c,d)$ in~\cite{Brz13}.
\begin{definition}
\label{def:regular}
For $n\ge 3$, let $\cD_n=\cD_n(a,b,c,d)=(Q_n,\Sig,\delta_n, 0, \{n-1\})$, where 
$\Sig=\{a,b,c,d\}$, 
and $\delta_n$ is defined by the transformations
$a\colon (0,\dots,n-1)$,
$b\colon(0,1)$,
$c\colon(n-1 \rightarrow 0)$, and
$d\colon \one$.
Let $L_n=L_n(a,b,c,d)$ be the language accepted by~$\cD_n$.
The structure of $\cD_n(a,b,c,d)$ is shown in Figure~\ref{fig:RegWit}. 
\end{definition}

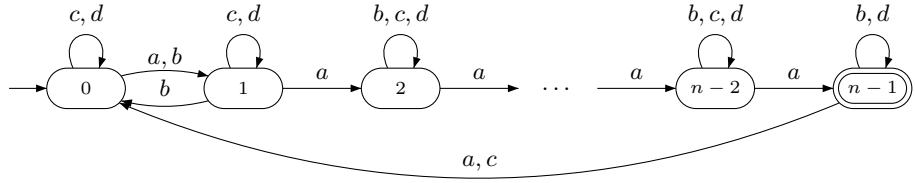
\begin{figure}[ht]
\unitlength 8.5pt
\begin{center}\begin{picture}(37,10)(0,2)
\gasset{Nh=1.8,Nw=3.5,Nmr=1.25,ELdist=0.4,loopdiam=1.5}
	{\scriptsize
\node(0)(1,7){0}\imark(0)
\node(1)(8,7){1}
\node(2)(15,7){2}
}
\node[Nframe=n](3dots)(22,7){$\dots$}
	{\scriptsize
\node(n-2)(29,7){$n-2$}
	}
	{\scriptsize
\node(n-1)(36,7){$n-1$}\rmark(n-1)
	}
\drawloop(0){$c,d$}
\drawedge[curvedepth= .8,ELdist=.1](0,1){$a,b$}
\drawedge[curvedepth= .8,ELdist=-1.2](1,0){$b$}
\drawedge(1,2){$a$}
\drawloop(2){$b,c,d$}
\drawedge(2,3dots){$a$}
\drawedge(3dots,n-2){$a$}
\drawloop(n-2){$b,c,d$}
\drawedge(n-2,n-1){$a$}
\drawedge[curvedepth= 4.0,ELdist=-1.0](n-1,0){$a,c$}
\drawloop(n-1){$b,d$}
\drawloop(1){$c,d$}
\end{picture}\end{center}
\caption{ DFA  of  Definition~\ref{def:regular}.}
\label{fig:RegWit}
\end{figure}

\begin{theorem}
\label{thm:boolean}
For $m,n \ge 3$, let $L'_m$ (respectively, $L_n$) be a regular language with $m$ (respectively, $n$) quotients over an alphabet $\Sig'$, (respectively, $\Sig$). 
Then the complexity of union and symmetric difference is $mn+m+n+1$ 
and this bound is met by $L'_m(a,b,-,c)$ and $L_n(b,a,-,d)$; 
the complexity of difference is $mn+m$, and this bound is met by $L'_m(a,b,-,c)$ and
$L_n(b,a)$; the complexity of intersection is $mn$ and this bound is met by  
$L'_m(a,b)$ and $L_n(b.a)$.

\end{theorem}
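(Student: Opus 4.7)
The plan is to establish the upper bounds via a product-with-sinks construction and the matching lower bounds by reachability and distinguishability arguments in the product DFA of the stated dialects of $\cD_n$. First I would identify the natural alphabet of $L'_m \circ L_n$: it is $\Sig'\cup\Sig$ for $\cup$ and $\oplus$, $\Sig'$ for $\setminus$ (since $L'_m\setminus L_n \subseteq \Sig'^*$), and $\Sig'\cap\Sig$ for $\cap$ (since any letter outside this set is absent from every word of $L'_m\cap L_n$). Completing each input DFA over the relevant alphabet with at most one absorbing empty sink for the missing letters and taking the standard direct product gives $(m+1)(n+1)=mn+m+n+1$ states for $\cup$ and $\oplus$, $m(n+1)=mn+m$ for $\setminus$, and $mn$ for $\cap$. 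In the last case no sink row or column is needed, because the letters that would produce it are exactly those absent from the output alphabet, so the would-be sink states are unreachable by any word in $(\Sig'\cap\Sig)^*$; this is the correction flagged in the footnote.

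For the lower bound on $\cup$, in the product of $\cD'_m(a,b,-,c)$ and $\cD_n(b,a,-,d)$ the letter $a$ is an $m$-cycle on the first coordinate and the transposition $(0,1)$ on the second, $b$ is $(0,1)$ on the first and an $n$-cycle on the second, $c$ fixes the first and sinks the second, and $d$ fixes the second and sinks the first. To reach every pair $(p,q)\in Q_m\times Q_n$ I use only $\{a,b\}^*$, combining a rotation $b^q$ on the second coordinate with a rotation $a^{p'}$ on the first and correcting for the parity-controlled transpositions that each letter imposes on the other coordinate. Then $c$ followed by suitable $\{a,b\}$-words sweeps out the $m$ states $(p,\mathrm{sink})$, the mirror use of $d$ sweeps out the $n$ states $(\mathrm{sink},q)$, and $d$ after $c$ reaches the unique dead pair $(\mathrm{sink},\mathrm{sink})$, totalling $mn+m+n+1$ reachable states. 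For distinguishability, inside the $mn$-block I reuse the same-alphabet argument known for $\cD_n$, which is most complex for Boolean operations; a state $(p,\mathrm{sink})$ is separated from any non-sink pair by a word driving the second coordinate of the latter to $n-1$ while keeping the first away from $m-1$, so only the non-sink pair is final. The mirror argument handles $(\mathrm{sink},q)$, and $(\mathrm{sink},\mathrm{sink})$ is the unique dead state. Symmetric difference is treated identically, since both sinks remain non-final under XOR.

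For difference, only $\cD_n$ needs a sink (absorbing the letter $c\in\Sig'\setminus\Sig$), and a state $(p,\mathrm{sink})$ in the product is final iff $p$ is final in $\cD'_m$; the $m$ such states are pairwise distinguishable via the $m$ distinct quotients of $\cD'_m$ and are also distinguishable from the main block by any word that drives a non-sink pair to $(m-1,n-1)$, since this pair is non-final for difference while $(m-1,\mathrm{sink})$ is final. For intersection, the witnesses $L'_m(a,b)$ and $L_n(b,a)$ share the alphabet $\{a,b\}$, no sink is adjoined, and the classical same-alphabet product argument applied to this dialect of $\cD_n$ delivers all $mn$ states. The main technical obstacle throughout is the distinguishability bookkeeping in the product DFA: the actions of $a$ and $b$ entangle the two coordinates through parity-dependent transpositions, while $c$ and $d$ break symmetry asymmetrically by freezing one coordinate and sinking the other, so the separating words must be chosen to undo these entanglements at the right moment while still probing the target coordinate.
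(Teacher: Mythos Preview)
Your overall architecture matches the paper's: add sinks, take the direct product, show reachability, then distinguishability. The upper-bound portion and the identification of the correct output alphabet for each operation are exactly right, including the reason the sink row/column disappears for $\setminus$ and $\cap$.

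Two points deserve tightening. First, for reachability of the $mn$ pairs in $Q'_m\times Q_n$ using only $\{a,b\}$, the paper does not attempt a direct parity-correction construction; it invokes \cite{BBMR14} and notes that the cases $(m,n)\in\{(3,4),(4,3),(4,4)\}$ are exceptional and were checked by computation. Your sketch ``rotate on one coordinate and correct the transposition on the other'' runs into the same obstruction in those small cases, so you should either cite the same result or flag the exceptions.

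Second, your distinguishability bookkeeping has real gaps. For union you separate a sink-column state $(p',\emp)$ from a main-block state, but you never separate $(p',\emp)$ from a sink-row state $(\emp',q)$, nor two sink-column states from each other; and in your ``drive the second coordinate to $n-1$ while keeping the first away from $m-1$'' step the same word moves the first coordinate of \emph{both} states, so when $p'=(m-1)'$ the argument as stated fails. The same issue recurs in your difference argument: driving a main-block pair to $((m-1)',n-1)$ does not control where the first coordinate of $(p',\emp)$ lands. The paper sidesteps all of this with a single clean invariant: every state in the sink column $V=\{(p',\emp)\}$ accepts some word containing $c$ and none containing $d$; every state in the sink row $H=\{(\emp',q)\}$ accepts some word containing $d$ and none containing $c$; every main-block state accepts words of both kinds; and $(\emp',\emp)$ accepts nothing. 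That one observation separates the four regions simultaneously, after which $a^*$ distinguishes within $V$ and $b^*$ within $H$. Adopting this invariant would close your gaps without the entanglement bookkeeping you anticipate.
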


\begin{proof}

Let $\cD'_m = ( Q'_m, \Sig', \delta', 0',F')$  and 
$\cD_n = (Q_n, \Sig,\delta, 0, F)$ be minimal DFAs for $L'_m$ and $L_n$, respectively.
To calculate an upper bound for the boolean operations assume that $\Sig'\setminus \Sig $ and 
$\Sig \setminus \Sig' $ are non-empty; this assumption results in the largest upper bound. 
We  add an empty state $\emp'$ to $\cD'_m$ to send all transitions under the letters from 
$\Sig \setminus \Sig' $ to that state; thus we get an $(m+1)$-state DFA  $\cD'_{m,\emp'}$. Similarly, we add an empty state $\emp$  to $\cD_n$ to get $\cD_{n,\emp}$.
Now we have two DFAs over the same alphabet, and an ordinary problem of finding an upper bound for the boolean operations on two languages over the same alphabet, \emph{except that these languages both have empty quotients}. 
It is clear that $(m+1)(n+1)$ is an upper bound for all four operations, but it can be improved for difference and intersection.
Consider the direct product $\cP_{m,n}$ of $\cD'_{m,\emp'}$ and $\cD_{n,\emp}$. 

For difference, all $n+1$ states of $\cP_{m,n}$ that have the form $(\emp', q)$, where $q\in Q_n\cup \{\emp\}$ are empty. Hence the bound can be reduced by $n$ states to $mn+m+1$. 
However, the empty states can only be reached by words in $\Sig\setminus \Sig'$ and the alphabet of $L'_m\setminus L_n$ is a subset of $\Sig'$; hence  the bound is reduced futher to $mn+m$.

For intersection, all $n$ states $(\emp',q)$, $q\in Q_n$, and all $m$ states $(p',\emp)$, $p'\in Q'_m$, are equivalent to the empty state $(\emp',\emp)$, thus reducing the upper bound to
$mn+1$. Since the alphabet of $L'_m\cap L_n$ is a subset of $\Sig'\cap \Sig$, these empty states cannot be reached and the bound is reduced to $mn$.

To prove that the bounds are tight, we start with $\cD_n(a,b,c,d)$ of Definition~\ref{def:regular}.
For $m,n\ge 3$,  let $D'_m(a,b,-,c)$ be the dialect of $\cD'_m(a,b,c,d)$ where $c$ plays the role of $d$ and the alphabet is restricted to $\{a,b,c\}$, and let
$\cD_n(b,a,-,d)$ be the dialect of $\cD_n(a,b,c,d)$ in which $a$ and $b$ are permuted, and the alphabet is  restricted to $\{a,b,d\}$;
see Figure~\ref{fig:boolean}.

\begin{figure}[ht]
\unitlength 7.5pt
\begin{center}\begin{picture}(37,17)(-3.5,2)
\gasset{Nh=2.2,Nw=5.0,Nmr=1.25,ELdist=0.4,loopdiam=1.5}
	{\scriptsize
\node(0')(-2,14){$0'$}\imark(0')
\node(1')(7,14){$1'$}
\node(2')(16,14){$2'$}
\node[Nframe=n](3dots')(25,14){$\dots$}
\node(m-1')(34,14){$(m-1)'$}\rmark(m-1')
}
\drawedge[curvedepth= 1.4,ELdist=-1.3](0',1'){$a,b$}
\drawedge[curvedepth= 1,ELdist=.3](1',0'){$b$}
\drawedge(1',2'){$a$}
\drawedge(2',3dots'){$a$}
\drawedge(3dots',m-1'){$a$}
\drawedge[curvedepth= -5.2,ELdist=-1](m-1',0'){$a$}
\drawloop(0'){$c$}
\drawloop(1'){$c$}
\drawloop(2'){$b,c$}
\drawloop(m-1'){$b,c$}

\gasset{Nh=2.2,Nw=5.0,Nmr=1.25,ELdist=0.4,loopdiam=1.5}
	{\scriptsize
\node(0)(-2,7){0}\imark(0)
\node(1)(7,7){1}
\node(2)(16,7){2}
\node[Nframe=n](3dots)(25,7){$\dots$}
\node(n-1)(34,7){$n-1$}\rmark(n-1)
	}
\drawloop(0){$d$}
\drawloop(1){$d$}
\drawloop(2){$a,d$}
\drawloop(n-1){$a,d$}
\drawedge[curvedepth= 1.2,ELdist=-1.3](0,1){$a,b$}
\drawedge[curvedepth= .8,ELdist=.25](1,0){$a$}
\drawedge(1,2){$b$}
\drawedge(2,3dots){$b$}
\drawedge(3dots,n-1){$b$}
\drawedge[curvedepth= 5.0,ELdist=-1.5](n-1,0){$b$}

\end{picture}\end{center}
\caption{Witnesses $D'_m(a,b,-,c)$ and $\cD_n(b,a,-,d)$ for boolean operations. }
\label{fig:boolean}
\end{figure}
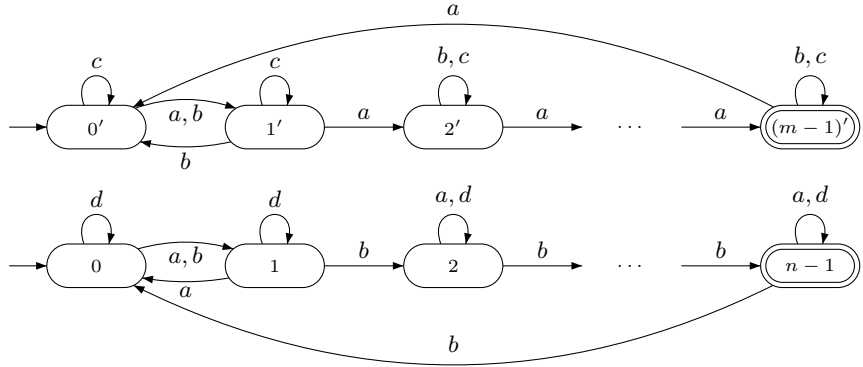

To finish the proof, we complete the two DFAs by adding empty states, and construct their direct product as illustrated in Figure~\ref{fig:cross}.
If we restrict both DFAs to the alphabet $\{a,b\}$,  we have the usual problem of determining the complexity of two DFAs over the same alphabet. 
By \cite[Theorem 1]{BBMR14}, all $mn$ states of the form $(p',q)$, $p'\in Q'_m$, $q\in Q_n$, are reachable and pairwise distinguishable by words in $\{a,b\}^*$ for all proper boolean operations if $(m,n)\notin \{(3,4),(4,3),(4,4)\}$.
For our application, the three exceptional cases were verified by computation.

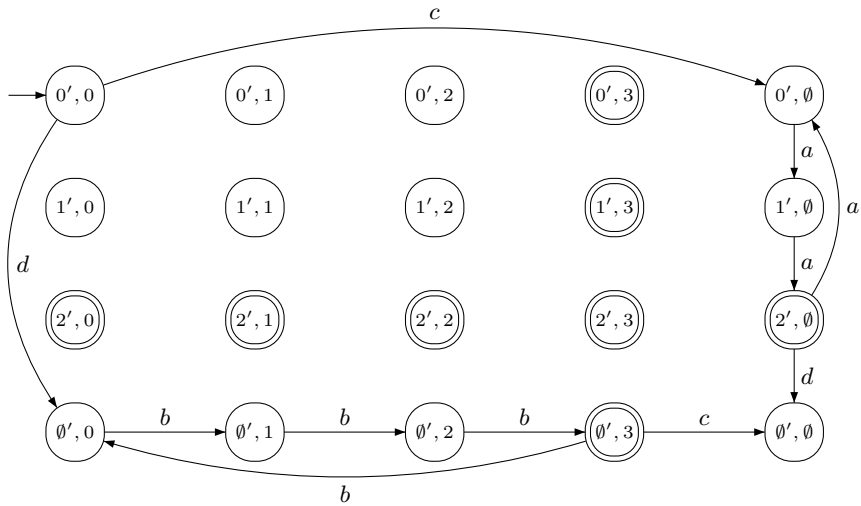
\begin{figure}[ht]
\unitlength 8.5pt
\begin{center}\begin{picture}(35,20)(0,-2)
\gasset{Nh=2.6,Nw=2.6,Nmr=1.2,ELdist=0.3,loopdiam=1.2}
	{\scriptsize
\node(0'0)(2,15){$0',0$}\imark(0'0)
\node(1'0)(2,10){$1',0$}
\node(2'0)(2,5){$2',0$}\rmark(2'0)
\node(3'0)(2,0){$\emp',0$}

\node(0'1)(10,15){$0',1$}
\node(1'1)(10,10){$1',1$}
\node(2'1)(10,5){$2',1$}\rmark(2'1)
\node(3'1)(10,0){$\emp',1$}

\node(0'2)(18,15){$0',2$}
\node(1'2)(18,10){$1',2$}
\node(2'2)(18,5){$2',2$}\rmark(2'2)
\node(3'2)(18,0){$\emp',2$}

\node(0'3)(26,15){$0',3$}\rmark(0'3)
\node(1'3)(26,10){$1',3$}\rmark(1'3)
\node(2'3)(26,5){$2',3$}\rmark(2'3)
\node(3'3)(26,0){$\emp',3$}\rmark(3'3)

\node(0'4)(34,15){$0',\emp$}
\node(1'4)(34,10){$1',\emp$}
\node(2'4)(34,5){$2',\emp$}\rmark(2'4)
\node(3'4)(34,0){$\emp',\emp$}
	}
\drawedge(3'0,3'1){$b$}
\drawedge(3'1,3'2){$b$}
\drawedge(3'2,3'3){$b$}
\drawedge[curvedepth=2,ELdist=.4](3'3,3'0){$b$}

\drawedge(0'4,1'4){$a$}
\drawedge(1'4,2'4){$a$}
\drawedge[curvedepth=-2,ELdist=-.9](2'4,0'4){$a$}
\drawedge(3'3,3'4){$c$}
\drawedge(2'4,3'4){$d$}

\drawedge[curvedepth=-3,ELdist=.4](0'0,3'0){$d$}
\drawedge[curvedepth=3,ELdist=.4](0'0,0'4){$c$}

\end{picture}\end{center}
\caption{Direct product for union shown partially.}
\label{fig:cross}
\end{figure}

To prove that the remaining states are reachable, observe that 
$(0',0) \stackrel{d} {\longrightarrow} (\emp',0)$ and
$(\emp',0) \stackrel{b^q} {\longrightarrow} (\emp',q)$, for $q\in Q_n$.
Symmetrically, 
$(0',0) \stackrel{c} {\longrightarrow} (0',\emp)$ and
$(0',\emp) \stackrel{a^p} {\longrightarrow} (p',\emp)$, for $p'\in Q'_m$.
Finally, $(\emp',n-1) \stackrel{c} {\longrightarrow} (\emp',\emp)$, and  all $(m+1)(n+1)$ states of the direct product are reachable.

It remains to verify that the appropriate states are pairwise distinguishable.
From \cite[Theorem 1]{BBMR14}, we know that all states in $Q'_m\times Q_n$ are distinguishable.
Let $H= \{(\emp',q) \mid q\in Q_n  \}$, and
$V= \{ (p',\emp) \mid p'\in Q'_m \}$.
For the operations consider four cases:
\bd

\item[Union]
The final states of $\cP_{m,n}$ are $\{((m-1)',q) \mid q\in Q_n\cup \{\emp\}  \}$,  and
$\{ (p',n-1) \mid p'\in Q'_m \cup \{\emp'\} \}$.
Every state in $V$ accepts a word with a $c$, whereas  no state in $H$ accepts such words.
Similarly, every state in $H$ accepts a word with a $d$, whereas  no state in $V$ accepts such words.
Every state in $Q'_m \times Q_n$ accepts a word with a $c$ and a word with a $d$. State $(\emp',\emp)$ accepts no words at all.
Hence any two states chosen from different sets (the sets being $Q'_m\times Q_n$, $H$, $V$,  and $\{(\emp',\emp)\}$) are distinguishable.
States in $H$ are distinguishable by words in $b^*$ and those in $V$, by words in $a^*$.
Therefore all $mn+m+n+1$ states are pairwise distinguishable.

\item[Symmetric Difference]
The final states here are all the final states for union except $( (m-1)',n-1 )$. The rest of the argument is the same as for union.

\item[Difference]
Here the final states are $\{((m-1)', q) \mid q\neq n-1\}$.
The $n$ states of the form $(\emp',q)$, $q \in Q_n$, are now equivalent to the empty state $(\emp',\emp)$. 
The remaining states are pairwise distinguishable by the arguments used for union. Hence we have $mn+m+1$ distinguishable states.
However, the alphabet of $L'_m \setminus L_n$ is $\{a,b,c\}$, and  the empty state can only be reached by $d$. Since this empty state is not needed, neither is $d$,  the final bound is $mn+m$  and it is reached by $L'_m(a,b,-,c)$ and $L_n(b,a)$.

\item[Intersection]
Here only $((m-1)', n-1 )$ is final and all states $(p', \emp)$, $p' \in Q'_m$, and $(\emp',q)$, $q\in Q_n$ are equivalent to $(\emp',\emp)$, leaving $mn+1$ distinguishable states.
 However, the alphabet of $L'_m \cap L_n$ is $\{a,b\}$, and so the empty state cannot be reached. This gives the final bound of $mn$ states, and this bound is met by $L'_m(a,b)$ and $L_n(b.a)$ as was already known in~\cite{Brz13}.
\qed
\ed
\end{proof}

\section{Product}

\begin{theorem}
\label{thm:product}
For $m,n \ge 3$, let $L'_m$ (respectively, $L_n$) be a regular language with $m$ (respectively, $n$) quotients over an alphabet $\Sig'$, (respectively, $\Sig$). 
Then $\kappa(L'_m L_n) \le m2^n+2^{n-1}$, and this bound is met by $L'_m(a,b,-,c)$ and $L_n(b,a,-,d)$.
\end{theorem}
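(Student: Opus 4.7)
The plan is to adapt the standard $\eps$-NFA/subset-construction argument for product to the different-alphabets setting used in Theorem~\ref{thm:boolean}. For the upper bound, complete $\cD'_m$ and $\cD_n$ over $\Sig'\cup\Sig$ by adjoining sinks $\emp'$ and $\emp$ for the missing letters, form the $\eps$-NFA for $L'_m L_n$ by adding an $\eps$-transition from the final state $(m-1)'$ to the initial state $0$ of the second DFA, and apply the subset construction. Each reachable subset then has the form $\{p\}\cup S$ with $p\in Q'_m\cup\{\emp'\}$ and $S\subseteq Q_n\cup\{\emp\}$, subject to $0\in S$ whenever $p=(m-1)'$. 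Because $\emp$ contributes nothing to the language of any subset containing it, we may drop $\emp$ from every $S$; this yields $(m-1)\cdot 2^n$ states with $p'\in Q'_m\setminus\{(m-1)'\}$ and $S\subseteq Q_n$, plus $2^{n-1}$ states with $p'=(m-1)'$ and $0\in S\subseteq Q_n$, plus $2^n$ states with $p'=\emp'$ and $S\subseteq Q_n$, totalling $m\cdot 2^n+2^{n-1}$.

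For the matching lower bound with witnesses $\cD'_m(a,b,-,c)$ and $\cD_n(b,a,-,d)$, the letters $a,b$ lie in both alphabets and act as an $m$-cycle and a transposition on $\cD'_m$ and as a transposition and an $n$-cycle on $\cD_n$; the letter $c\in\Sig'\setminus\Sig$ is the identity on $\cD'_m$ and kills $\cD_n$, while $d\in\Sig\setminus\Sig'$ kills $\cD'_m$ and is the identity on $\cD_n$. Reachability splits into two parts: the $(m-1)\cdot 2^n+2^{n-1}$ states with $p'\in Q'_m$ are reached using only $a,b$ by the classical same-alphabet reachability argument of \cite{Mas70,YZS94}, exploiting that $a^m$ returns the first component to $0'$ while inserting $0$ into $S$ at the unique visit to $(m-1)'$, and that the transposition $a$ together with the $n$-cycle $b$ on $Q_n$ generate the full symmetric group $S_n$, so that every target $S$ can be built by induction on $|S|$. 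The remaining $2^n$ states $(\emp',S)$ are then reached from any $(p',S)$ with $p'\in Q'_m$ by a single letter $d$, which sends $p'\mapsto\emp'$ while fixing $S$ pointwise.

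For distinguishability, states $\{p\}\cup S_1$ and $\{p\}\cup S_2$ with equal first component but $S_1\ne S_2$ are separated by first reading $d$ to arrive at $(\emp',S_1)$ versus $(\emp',S_2)$, freezing the first component, and then choosing a word in $\{a,b\}^*$ that sends some element of the symmetric difference $S_1\triangle S_2$ to the unique final state $n-1$. States with different first components are separated by first reading a suitable power of $a$ that drives one first-component through $(m-1)'$ but not the other, producing an asymmetric insertion of $0$ into the two sets, and then testing for the presence of $0$ via a word in $\{a,b\}^*$. The main obstacle will be the reachability book-keeping for the $2^n$ subsets paired with each non-final $p'\in Q'_m$: because every traversal of $(m-1)'$ forcibly injects $0$ into $S$, each target $S$ must be built by a carefully chosen sequence of insertions interleaved with $a,b$-permutations of $Q_n$, and the induction on $|S|$ must control which $S_n$-image of the previous subset is produced at each step.
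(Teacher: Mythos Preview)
Your upper-bound argument and reachability outline are essentially the paper's approach: the paper also builds the $\eps$-NFA, does the subset construction, and reaches the states $\{p'\}\cup S$ with $p'\in Q'_m$ by an induction on $|S|$ using only $a,b$, then reaches the bare subsets $S$ by a single $d$. Two cautions, however. First, the upper bound must hold for \emph{all} $L'_m$, not only those with a single final state; the paper handles a general number $k$ of final states and observes that the count $(2m-k)2^{n-1}+2^n$ is maximised at $k=1$. Second, the citation of \cite{Mas70,YZS94} is misplaced: the restrictions of $\cD'_m(a,b,-,c)$ and $\cD_n(b,a,-,d)$ to $\{a,b\}$ are not the witnesses in those papers, and the paper in fact works out the induction from scratch with a nontrivial case split on the parity of $n$ and on whether $S$ contains an odd state. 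You correctly flag this as the main obstacle, but it cannot be discharged by citation.

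The genuine gap is in your distinguishability plan for states with different first components. Your scheme uses only $a,b,d$: a power of $a$ to force an ``asymmetric insertion of $0$'', then a word in $\{a,b\}^*$. This fails when $S=Q_n$. On $Q_n$ the letters $a,b$ act as permutations, so any word over $\{a,b\}$ sends $Q_n$ to $Q_n$, and the inserted $0$ is already present; applying $d$ collapses both states to the same $(\emp',Q_n)$. Hence $\{p'\}\cup Q_n$ and $\{q'\}\cup Q_n$ (and likewise $\{p'\}\cup Q_n$ versus $Q_n$) are indistinguishable over $\{a,b,d\}$. The paper resolves this with the letter $c\in\Sig'\setminus\Sig$, which is the identity on $Q'_m$ but erases the $Q_n$-part: after $c$ the two states become $\{p'\}$ and $\{q'\}$ with empty second component, and then $a^{m-1-q}$ sends the second to $\{(m-1)',0\}$ while the first stays at some $\{r'\}$ with $r<m-1$, so they now differ in their $Q_n$-subsets and are separated by $b^{n-1}$. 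Your plan should route the different-first-component case through $c$ rather than $d$.
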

\begin{proof}
First we derive the upper bound.
Let $\cD'_m=( Q'_m, \Sig', \delta', 0',F')$  and $\cD_n=(Q_n,\Sig,\delta,0,F) $ be minimal DFAs of $L'_m$  and $L_n$, respectively.
We use the normal construction of an $\eps$-NFA $\cN$ to recognize $L'_mL_n$, by introducing an $\eps$-transition from each final state of $\cD'_m$ to the initial state of $\cD_n$, and changing all final states of $\cD'_m$ to non-final. This is illustrated in Figure~\ref{fig:product}, where $(m-1)'$ is the only final state of $\cD'_m$.
We then determinize $\cN$ using the subset construction to get the DFA $\cD$ for $L'_m L_n$.

Suppose $\cD'_m$ has $k$ final states, where $1 \le k \le m-1$.
I will show that $\cD$ can have only the following types of states: (a)  
at most $(m-k)2^n$ states $\{p'\} \cup S$, where $p'\in Q'_m\setminus F'$, and $S\subseteq Q_n$, 
(b) at most $k2^{n-1}$ states  
$\{ p' , 0\} \cup S$, where $p'\in F'$ and $S\subseteq Q_n\setminus \{0\}$,
and (c) at most $2^n$ states $S\subseteq Q_n$.
Because $\cD'_m$ is deterministic, there can be at most one state $p'$ of $Q'_m$ in any reachable subset. 
If $p' \notin F'$, it may be possible to reach any subset of states of $Q_n$ along with $p'$, and this accounts for (a).
If $p' \in F'$, then the set must contain $0$ and possibly any subset of $Q_n\setminus \{0\}$, giving (b).
It may also be possible to have any subset $S$ of $Q_n$ by applying an input that is not in $\Sig'$ to $\{0'\} \cup S$ to get $S$, and so we have~(c).
Altogether, there are at most $(m-k)2^n +k2^{n-1}+2^n = (2m-k)2^{n-1} + 2^n$ reachable subsets. This expression reaches its maximum when $k=1$, and hence we have at most $m2^n+2^{n-1}$ states in $\cD$.

\begin{figure}[ht]
\unitlength 7.5pt
\begin{center}\begin{picture}(37,17)(-4,2)
\gasset{Nh=2.0,Nw=4.6,Nmr=1.25,ELdist=0.4,loopdiam=1.5}
	{\scriptsize
\node(0')(-4,14){$0'$}\imark(0')
\node(1')(3,14){$1'$}
\node(2')(10,14){$2'$}
\node[Nframe=n](3dots')(17,14){$\dots$}
\node(m-1')(24,14){$(m-1)'$}
	}
\drawedge[curvedepth= 1.4,ELdist=-1.3](0',1'){$a,b$}
\drawedge[curvedepth= 1,ELdist=.3](1',0'){$b$}
\drawedge(1',2'){$a$}
\drawedge(2',3dots'){$a$}
\drawedge(3dots',m-1'){$a$}
\drawedge[curvedepth= -5.2,ELdist=-1](m-1',0'){$a$}
\drawloop(0'){$c$}
\drawloop(1'){$c$}
\drawloop(2'){$b,c$}
\drawloop(m-1'){$b,c$}

\gasset{Nh=2.0,Nw=3.7,Nmr=1.25,ELdist=0.4,loopdiam=1.5}
	{\scriptsize
\node(0)(7,7){0}\imark(0)
\node(1)(14,7){1}
\node(2)(21,7){2}
\node[Nframe=n](3dots)(28,7){$\dots$}
\node(n-1)(35,7){$n-1$}\rmark(n-1)
	}
\drawloop(0){$d$}
\drawloop(1){$d$}
\drawloop(2){$a,d$}
\drawloop(n-1){$a,d$}
\drawedge[curvedepth= 1.2,ELdist=-1.3](0,1){$a,b$}
\drawedge[curvedepth= .8,ELdist=.25](1,0){$a$}
\drawedge(1,2){$b$}
\drawedge(2,3dots){$b$}
\drawedge(3dots,n-1){$b$}
\drawedge[curvedepth= 5.0,ELdist=-1.5](n-1,0){$b$}
\drawedge[curvedepth= -1.5,ELdist=-1](m-1',0){$\eps$}
\end{picture}\end{center}
\caption{An NFA  for the product of  $L'_m(a,b,-,c)$ and $L_n(b,a,-,d)$. }
\label{fig:product}
\end{figure}
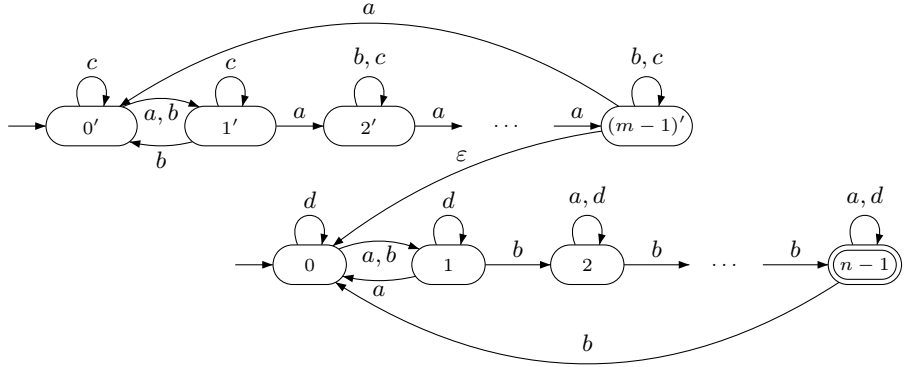

To prove that the bound is tight, we use the same witnesses as for boolean operations; see Figure~\ref{fig:product}. 
If $S=\{q_1,\dots, q_k\}\subseteq Q_n$ then $S+i= \{q_1+i,\dots, q_k+i\}$ and $S-i = \{q_1-i,\dots, q_k-i\} $,
where  addition and subtraction are modulo $n$.
Note that $b^2$ and $a^m$ ($a^2$ and $b^n$) act as the identity on $Q'_m$ ($Q_n$). 
If $p < m-1$, then $\{p'\} \cup S \stackrel{b^2}{\longrightarrow} \{p'\} \cup (S+2)$, for all $S\subseteq  Q_n$.
If $n$ is odd, then $(b^2)^{(n-1)/2} =b^{n-1}$ and 
$\{p'\} \cup S \stackrel{ b^{n-1} }{\longrightarrow} \{ p' \} \cup (S-1)$, for all $q \in Q_n$.
If $0, 1\notin S$ or $\{0,1\}\subseteq S$, then $a$ acts as the identity on $S$.

\begin{remark}
\label{rem:add1}
If $1\notin S$ and $\{(m-2)'\} \cup S$ is reachable, then 
$\{0',1\} \cup S$ is reachable for all $S\subseteq  Q_n\setminus \{1\}$.
\end{remark}
\begin{proof}
If $0\in S$, then 
$\{(m-2)',0\} \cup S\setminus\{0\} \stackrel{a}{\longrightarrow} \{(m-1)',0,1\} \cup S\setminus\{0\} 
		\stackrel{a}{\longrightarrow} \{ 0',0,1 \} \cup S\setminus\{0\} = \{0',1\} \cup S. $	
If $0\notin S$,	then	
$\{(m-2)'\} \cup S \stackrel{a}{\longrightarrow} \{(m-1)',0\} \cup S 
		\stackrel{a}{\longrightarrow} \{ 0',1 \} \cup S. $
\qed	
\end{proof}

We now prove that the languages of Figure~\ref{fig:product} meet the upper bound.
\smallskip

\noin 
{\bf Claim 1:}
\emph{All sets of the form $\{p'\} \cup S$, where $p'\in Q'_{m-1}$ and $S\subseteq Q_n$, are reachable.}
We show this by induction on the size of $S$.
\smallskip

\noin
{\bf Basis: $|S| =0$.}
The initial set is $\{0'\}$, and from $\{0'\}$ we reach  $\{p'\}$, $p'\in Q'_{m-1}$,  by  $a^{p}$, without reaching any states of $Q_n$. Thus the claim holds if $|S|=0$.
\smallskip

\noin
{\bf Induction Assumption:} $\{p'\} \cup S$, where $p'\in Q'_{m-1}$ and $S\subseteq Q_n$, is reachable if $|S|\le k$.
\smallskip

\noin
{\bf Induction Step:} We prove that if $|S|=k+1$, then $\{p'\} \cup S$ is reachable.
Let  $S=\{q_0,q_1,\dots, q_k\}$, where $0 \le q_0< q_1<\dots <q_k \le n-1$.
Suppose $q\in S$.
By assumption, sets
$\{p'\}\cup (S\setminus \{q\} -(q-1))$  are reachable for all $p'\in Q'_{m-1}$.
\smallskip

\noin
$\bullet$	\emph{All sets of the form $\{0'\} \cup S$ are reachable.}\\
	Note that $1\notin (S\setminus \{q\} -(q-1))$.
	By assumption, $\{(m-2)'\} \cup (S \setminus \{q\} -(q-1))$ is reachable.
	By Remark~\ref{rem:add1}, $\{0',1\}\cup (S\setminus \{q\} -(q-1))$  is  reachable.

\be 

\item
If there is an odd state $q$ in $S$, then
$\{0',1\}\cup (S\setminus \{q\} -(q-1)) \stackrel{b^{q-1}}{\longrightarrow} \{0',q\}\cup (S\setminus \{q\}) =\{0'\} \cup S.$

\item If there is no odd state in $S$ and $n$ is odd, then $S \subseteq \{0,2,\dots, n-1\}$. Pick $q \in S$. Then 
$\{0', 1\} \cup (S \setminus \{q\} - (q-1)) \xrightarrow{b^q} \{0', q+1\} \cup (S \setminus\{q\}+1) \xrightarrow{b^{n-1}} \{0', q\} \cup S \setminus\{q\} = \{0'\} \cup S$.

\item
If there is no odd state and $n$ is even, then $S\subseteq \{0,2,\dots,n-2\}$ (so $n-1\notin S$).
		
	\be

	\item
	If $0 \notin S$, then $0,1 \notin S+1$. By 1, $\{0'\} \cup (S+1)$ is reachable, since $S+1$ contains an odd 			state. Then 
	$\{0'\} \cup (S+1) \stackrel{a}{\longrightarrow} \{1'\} \cup (S+1) \stackrel{b^{n-1}}{\longrightarrow} \{0'\} \cup S$.
			
	\item
	If $2\notin S$, then $0,1 \notin S-1$. By 1, $\{0'\} \cup (S-1)$ is reachable, since $S-1$ contains an odd state. Then 
	$\{0'\} \cup (S-1) \stackrel{a}{\longrightarrow} \{1'\} \cup (S-1) \stackrel{b^{n+1}}{\longrightarrow} \{0'\} \cup S$.

	\item
	If $\{0,2\} \subseteq S$, then $0\notin S-1$, and $1, n-1\in S-1$. By 1, $\{0'\} \cup (S-1)$ is reachable, since $1\in 		S-1$. 
	Note that $aba$ sends 1 to 0, $n-1$ to 1, and adds 1 to each state $q\ge 3$ of $S-1$; thus $2\notin (S-1)aba$, and 
	$\{0'\} \cup (S-1) \stackrel{aba}{\longrightarrow} \{1', 0, 1\} \cup S\setminus \{0,2\}$.
	Next, $b^{n-1}$ sends 0 to $n-1$ and subtracts 1 from every other element of $S\setminus \{0,2\}$.
	Hence
	$\{1', 0, 1\} \cup S\setminus \{0,2\} \stackrel{b^{n-1}}{\longrightarrow} \{0',n-1,0\} \cup (S	\setminus \{0,2\} -1) 	 \stackrel{ab }{\longrightarrow}  \{0',0,2\} \cup (S	\setminus \{0,2\}) =\{0'\} \cup S$.

	\ee
		
\ee		
				
\noin
$\bullet$ \emph{All sets of the form $\{1'\} \cup S$ are reachable.}\\
If 0 and 1 are not in $S$ or are both in $S$, then $\{0'\} \cup S \stackrel{a}{\longrightarrow} \{1'\} \cup S$.
If $0\in S$ but $1\notin S$, then $\{0',1\} \cup S\setminus\{0\} \stackrel{a}{\longrightarrow} \{1',0\} \cup S\setminus\{0\} =\{1'\} \cup S$.
If $1\in S$ but $0\notin S$, then $\{0',0\} \cup S\setminus\{1\} \stackrel{a}{\longrightarrow} \{1',1\} \cup S\setminus\{1\} =\{1'\} \cup S$.
\medskip

\noin
\label{p'}
$\bullet$ \emph{All sets of the form $\{p'\} \cup S$, where $2 \le p\le m-2$,  are reachable.}\\
If $p$ is even, then $\{0'\} \cup S \stackrel{a^p}{\longrightarrow} \{p'\} \cup S$.\\
If $p$ is odd, then $\{1'\} \cup S \stackrel{a^{p-1}}{\longrightarrow} \{p'\} \cup S$.
\medskip

\noin 
{\bf Claim 2:}
\label{(m-1)'}
\emph{All sets of the form $\{(m-1)', 0\} \cup S$ are reachable.}
	\be
	\item
	By Claim~1, $\{(m-3)'\} \cup S$ is reachable.
	If $q_0=1$, then\\
	$\{(m-3)',1\} \cup S\setminus \{1\} \stackrel{a^2}{\longrightarrow} \{(m-1)', 0,1\} \cup S\setminus\{1\}=
	\{(m-1)', 0\} \cup S.$
	\item
	By Claim~1, $\{(m-2)'\} \cup S$ is reachable.
	If $q_0\ge 2$, then \\
	$\{(m-2)'\} \cup S \stackrel{a}{\longrightarrow} \{(m-1)', 0\} \cup S.$
	\ee
	\smallskip

\noin 
 {\bf Claim 3:}
\emph{All sets of the form $S$ are reachable.}\\
By~Claim 1,  $\{0'\} \cup S$ is reachable for every $S$, and $\{0'\} \cup S \stackrel{d}{\longrightarrow} S.$
\smallskip

For distinguishability, note that only state $q$ accepts $w_q = b^{n-1-q}$ in $\cD_n$. Hence, if two states of the product have different  sets $S$ and $S'$ and $q\in S\oplus S'$, then they can be distinguished by $w_q$. State $\{p'\} \cup S$ is distinguished from $S$ by $ca^{m-1-p}b^{n-1}$. If $p <q$, states $\{p'\} \cup S$ and $\{q'\} \cup S$ are distinguished as follows. Use $ca^{m-1-q}$ to reach $\{(p+m-1-q)'\}$ from $p'$ and $\{(m-1)'\} \cup \{0\}$ from $q'$. The reached states are distinguishable since they differ in their subsets of $Q_n$. \qed
\end{proof}

\section{Most Complex Regular Languages}

A \emph{most complex} regular language stream is one that, together with some dialects, meets 
 the complexity bounds for all boolean operations, product, star, and reversal, and has the largest syntactic semigroup and most complex atoms\footnote{The \emph{atom congruence} is a left congruence defined as follows: two words $x$ and $y$ are equivalent if 
 $ux\in L$ if and only if  $uy\in L$ for all $u\in \Sig^*$. 
 Thus $x$ and $y$ are equivalent if
 $x\in u^{-1}L$ if and only if $y\in u^{-1}L$.
 An equivalence class of this relation is called an \emph{atom} of $L$~\cite{BrTa14,Iva16}. 
It follows that an atom is a non-empty intersection of complemented and uncomplemented quotients of $L$. 
The number of atoms and their quotient complexities are possible measures of complexity of regular languages~\cite{Brz13}.
For more information about atoms and their complexity, see~\cite{BrTa13,BrTa14,Iva16}.
}~\cite{Brz13}. 
A most complex stream should have the smallest possible alphabet sufficient to meet all the bounds.
Most complex streams are useful in systems dealing with regular languages and finite automata. One would like to know the maximal sizes of automata that can be handled by the system. In view of the existence of most complex streams, one stream can be used to test all the operations.
Here we use the stream of~\cite{Brz13}  shown in Figure~\ref{fig:RegWit}. 

\begin{theorem}[Most Complex Regular Languages]
\label{thm:main}
For each $n\ge 3$, the DFA of Definition~\ref{def:regular} is minimal and its 
language $L_n(a,b,c,d)$ has complexity $n$.
The stream $(L_m(a,b,c,d) \mid m \ge 3)$  with dialect streams
$(L_n(a,b,-,c) \mid n \ge 3)$ and $(L_n(b,a,-,d) \mid n \ge 3)$
is most complex in the class of regular languages.
In particular, it meets all the complexity bounds below, which are maximal for regular languages.
In several cases the bounds can be met with a restricted alphabet.
\begin{enumerate}
\item
The syntactic semigroup of $L_n(a,b,c)$ has cardinality $n^n$.  
\item
Each quotient of $L_n(a)$ has complexity $n$.
\item
The reverse of $L_n(a,b,c)$ has complexity $2^n$, and $L_n(a,b,c)$ has $2^n$ atoms.
\item
For each atom $A_S$ of $L_n(a,b,c)$, the complexity $\kappa(A_S)$ satisfies:
\begin{equation*}
	\kappa(A_S) =
	\begin{cases}
		2^n-1, 			& \text{if $S\in \{\emp,Q_n\}$;}\\
		1+ \sum_{x=1}^{|S|}\sum_{y=1}^{n-|S|} \binom{n}{x}\binom{n-x}{y},
		 			& \text{if $\emp \subsetneq S \subsetneq Q_n$.}
		\end{cases}
\end{equation*}
\item
The star of $L_n(a,b)$ has complexity $2^{n-1}+2^{n-2}$.
\item
The product $L'_m(a,b,-,c) L_n(b,a,-,d)$ has complexity $m2^n+2^{n-1}$.
\item
The complexity of $L'_m(a,b,-,c) \circ L_n(b,a,-,d)$ 
is $mn+m+n+1$ if $\circ\in \{\cup,\oplus\}$,
that of  $L'_m(a,b,-,c) \setminus L_n(b,a)$ is 
 $mn+m$, and 
 that of $L'_m(a,b) \cap L_n(b,a)$ is $mn$. 

\end{enumerate}
\end{theorem}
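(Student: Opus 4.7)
The plan is to verify each claim separately, exploiting the fact that the DFA of Definition~\ref{def:regular} coincides with the universal witness $\cU_n(a,b,c,d)$ studied in~\cite{Brz13}, for which most of the individual complexity bounds (syntactic semigroup, reversal, atoms, star) are already on record. The new contributions of the theorem are items 6 and 7, which are exactly the conclusions of Theorems~\ref{thm:product} and~\ref{thm:boolean} just proved, together with the careful bookkeeping showing that the listed alphabet restrictions still achieve every bound.

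First I would dispose of minimality and item 2. From state $0$, applying $a^i$ reaches state $i$, so using only $a$ we visit every state of $Q_n$. Since $n-1$ is the unique final state, and for every pair $i \neq j$ there is a power of $a$ that sends exactly one of them to $n-1$, all states are pairwise distinguishable; this also shows $\kappa(L_n)=n$. The same argument applied to the DFA with initial state $k$ gives $\kappa(a^{-k} L_n) = n$, yielding item 2 (the alphabet $\{a\}$ suffices because the $n$-cycle alone is enough for reachability, and $n-1$ remains the unique accepting state).

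For item 1, note that on $Q_n$ the letters $a$, $b$, $c$ induce an $n$-cycle, the transposition $(0,1)$, and the singular idempotent $(n-1\to 0)$; by the standard generation result for $\cT_{Q_n}$ (an $n$-cycle together with a transposition of two consecutive elements of that cycle and an idempotent of defect one generates all of $\cT_{Q_n}$), these three transformations generate a semigroup of cardinality $n^n$, and $d$ is not needed. For items 3, 4, and 5 I would cite the corresponding computations for $\cU_n$ in~\cite{Brz13,BrTa14,Iva16} and check that the distinguishing words and the reachability arguments there only use $\{a,b,c\}$ (for reversal and atoms) or $\{a,b\}$ (for star), which is immediate because $d$ acts as $\one$ in $\cD_n$ and can thus be removed without affecting any subset reachability or the Brzozowski construction for the reverse/star.

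Items 6 and 7 are quoted from Theorems~\ref{thm:product} and~\ref{thm:boolean}, whose witnesses were precisely the dialects $L'_m(a,b,-,c)$, $L_n(b,a,-,d)$, $L_n(b,a)$, and $L'_m(a,b)$ displayed in the statement. The main obstacle, as I see it, is item 4: the piecewise formula for $\kappa(A_S)$ requires a genuine combinatorial count of atomic intersections, and the verification in~\cite{BrTa14,Iva16} must be traced carefully to confirm that the witnesses used there rely only on the transformations induced by $a$, $b$, $c$ and not on $d$. Since every subset of $Q_n$ and every distinguishing word in those arguments can be realized without $d$ (because $d$ is the identity transformation), the restriction of the alphabet is harmless, and the theorem assembles from the ingredients above.
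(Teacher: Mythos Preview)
Your proposal is correct and follows essentially the same approach as the paper: items 1--5 are deferred to~\cite{Brz13} (you add a few lines of direct verification for minimality, item~1 and item~2, but these are the standard arguments recorded there), and items 6 and 7 are read off from Theorems~\ref{thm:boolean} and~\ref{thm:product}. The only difference is cosmetic: the paper's proof is a two-line citation, whereas you spell out why the alphabet restrictions are harmless (because $d\colon\one$), which is implicit in~\cite{Brz13}.
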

\begin{proof}
The proofs of 1--5 can be found in~\cite{Brz13}, and
Claims 6 and 7 are proved in the present paper, Theorems~\ref{thm:boolean} and~\ref{thm:product}.
\qed
\end{proof}
\begin{proposition}[Marek Szyku{\l}a, personal communication]
At least four inputs are required for a most complex regular language. In particular, four inputs are needed for union: two inputs are needed to reach all pairs of states in $Q'_m \times Q_n$, one input in $\Sigma' \setminus \Sigma$ for pairs $(p',\emp)$ with $p'\in Q'_m$, and one in $\Sigma \setminus \Sigma'$ for pairs $(\emp',q)$ with $q \in  Q_n$.
\end{proposition}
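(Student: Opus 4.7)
The plan is to work inside the direct-product construction $\cP_{m,n}$ of the completed DFAs $\cD'_{m,\emp'}$ and $\cD_{n,\emp}$ used in Theorem~\ref{thm:boolean}: meeting the union bound $mn+m+n+1$ forces every one of its $(m+1)(n+1)$ states to be reachable from $(0',0)$. I would partition these states into the interior $Q'_m\times Q_n$, the right column $Q'_m\times\{\emp\}$, the bottom row $\{\emp'\}\times Q_n$, and the corner $(\emp',\emp)$, and argue that populating each of the three strips forces its own type of input letter.

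The two ``exclusive'' letters are the easy half. A letter $c\in\Sig'\setminus\Sig$ acts on $\cD_{n,\emp}$ by sending every state of $Q_n\cup\{\emp\}$ to the sink $\emp$, so any path ending at $(p',\emp)$ with $p'\in Q'_m$ must contain at least one such letter; no input of $\Sig$ alone can move the second coordinate from $Q_n$ into $\emp$. A symmetric argument gives at least one letter of $\Sig\setminus\Sig'$ for populating the bottom row.

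The heart of the proof is the claim that at least two letters of $\Sig\cap\Sig'$ are required. The key structural observation is that the interior $Q'_m\times Q_n$ is closed under letters of $\Sig\cap\Sig'$, while any letter outside $\Sig\cap\Sig'$ immediately leaves the interior by sending one coordinate to $\emp$ or $\emp'$, from which no input can return since both are sinks. Hence the interior states reachable from $(0',0)$ are exactly those reachable using words over $\Sig\cap\Sig'$. If only a single common letter $a$ were available, the reachable interior from $(0',0)$ would be the orbit under iterated application of $a$, whose size is bounded by $\max(m,n)+\operatorname{lcm}(m,n)$ via the standard tail-plus-cycle decomposition of orbits of a single function. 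For $m=n\ge 3$ this is at most $2n$, strictly less than $mn=n^2$. Since a most complex stream must meet the union bound for every $m,n\ge 3$, two common letters are necessary.

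Combining the three lower bounds yields $|\Sig'\cup\Sig|\ge 4$, matched by the four-letter witnesses of Theorem~\ref{thm:boolean}. The step I expect to be the main obstacle is the common-letter argument: its validity rests on the observation that $\emp$ and $\emp'$ are genuine sinks, so no detour through an empty row or column can ever re-enter the interior. Without that structural fact, one might worry that letters of $\Sig'\setminus\Sig$ and $\Sig\setminus\Sig'$ could ``help'' populate $Q'_m\times Q_n$, and the orbit-size bound would not apply; once the sink property is made explicit, the reduction to a purely $\Sig\cap\Sig'$-orbit question is clean and the count closes.
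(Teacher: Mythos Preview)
The paper states this proposition without a separate proof; the second sentence of the proposition \emph{is} the entire argument sketch, and your write-up follows exactly that outline. Your treatment of the two exclusive letters is clean and correct.

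The single-common-letter bound, however, is wrong as stated. You assert that the orbit of $(0',0)$ under one letter $a$ has size at most $\max(m,n)+\operatorname{lcm}(m,n)$. The tail-plus-cycle count actually gives $\max(t_1,t_2)+\operatorname{lcm}(c_1,c_2)$, where $t_i+c_i$ is bounded by the size of factor $i$, and $\operatorname{lcm}(c_1,c_2)$ is \emph{not} bounded by $\operatorname{lcm}(m,n)$ in general. Concretely, for $m=n=5$ let $a$ act on $Q'_5$ as the $4$-cycle $(0',1',2',3')$ and on $Q_5$ as the $5$-cycle $(0,1,2,3,4)$; the orbit of $(0',0)$ then has size $\operatorname{lcm}(4,5)=20$, well above your claimed bound of $10$. (Worse, for coprime $m\ne n$ a single letter acting as a full cycle on each factor already reaches all $mn$ interior states, so your restriction to $m=n$ is essential, not merely convenient.)

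The conclusion survives once you replace the faulty bound by a direct argument for $m=n$. If both tails vanish, the orbit size is $\operatorname{lcm}(c_1,c_2)$ with $c_1,c_2\le n$, and this equals $n^2$ only if $c_1=c_2=n$ with $\gcd(c_1,c_2)=1$, which is impossible for $n\ge 2$. If some tail is at least $1$, the corresponding cycle length is at most $n-1$, and the orbit size is at most $(n-1)+(n-1)n=n^2-1$. Either way the orbit is strictly below $n^2=mn$, so one common letter cannot suffice and the four-letter lower bound follows.
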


\section{Conclusions}

Two complete DFAs over different alphabets $\Sig'$ and $\Sig$ are incomplete DFAs over  $\Sig' \cup \Sig$. Each DFA can be completed by adding  an empty state and sending all transitions induced by letters not in the DFA's alphabet to that state.
This results in an $(m+1)$-state DFA and an $(n+1)$-state DFA. 
From the theory about DFAs over the same alphabet we know that $(m+1)(n+1)$ is an upper bound for all boolean operations on the original DFAs, and that $m2^{n+1} +2^n$ is an upper bound for product. 
We have shown that the tight bounds for boolean operations are
$(m+1)(n+1)$ for union and symmetric difference, $mn+m$ for difference, and $mn$ for intersection, while
the tight bound for product is $m2^n+2^{n-1}$. In the same-alphabet case the tight bound is $mn$ for all boolean operations and it is $(m-1)2^n+2^{n-1}$ for product. 
In summary,  the restriction of identical alphabets is unnecessary and leads to incorrect results. 

It should be noted that if the two languages in question already have empty quotients, then making the alphabets the same does not require the addition of any states, and the traditional same-alphabet methods are correct. This is the case, for example, for prefix-free, suffix-free and finite languages.
\medskip

\noin
{\bf  Acknowledgment}
I am very grateful to Sylvie Davies, Bo Yang Victor Liu and Corwin Sinnamon for careful proofreading and constructive comments.
I thank Marek Szyku{\l}a for contributing Proposition~1 and several other useful comments.


\providecommand{\noopsort}[1]{}

\end{document}